\tikzset{>=latex}
\tikzset{->-/.style={decoration={
  markings,
  mark=at position #1 with {\arrow{>}}},postaction={decorate}}}
\let\oldnl\nl 
\newcommand{\nonl}{\renewcommand{\nl}{\let\nl\oldnl}}
\newcommand{\Hide}[1]{}
\definecolor{Myblue}{rgb}{0.8,0.85,1}         
\definecolor{Lightgreen}{rgb}{0.8,1,0.6}      
\newcommand{\HItxt}[1]{{\colorbox{Myblue}{#1}}}
\newif
\newif
\newtheorem{theorem}{Theorem}
\newtheorem{lemma}[theorem]{Lemma}
\newtheorem{proposition}[theorem]{Proposition}
\newtheorem{restxxx}[theorem]{Restriction}
\newtheorem{agreexxx}[theorem]{Agreement}
\newtheorem{termxxx}[theorem]{Terminology}
\newtheorem{notxxx}[theorem]{Notation}
\newtheorem{assumxxx}[theorem]{Assumption}
\newtheorem{convenxxx}[theorem]{Convention}
\newtheorem{exaxxx}[theorem]{Example}
\newtheorem{exexxx}[theorem]{Exercise}
\newtheorem{remxxx}[theorem]{Remark}
\newtheorem{openxxx}[theorem]{Open Problem}
\newtheorem{conjxxx}[theorem]{Conjecture}
\newtheorem{defxxx}[theorem]{Definition}
\newenvironment{definition}[1]{\begin{defxxx}[\emph{#1}]\rm}%
   {\hfill\QED\end{defxxx}}
\newtheorem{defxxxsansQED}[theorem]{Definition}
   {\end{defxxxsansQED}}
\newenvironment{sketch}
{\smallskip\noindent\ignorespaces\textit{Proof Sketch.}}
{\hfill\QED\medskip}
\newtheorem{Prxxx}[theorem]{Proof}
{\end{Prxxx}} 
\newcommand{\Set}[1]{\{ #1 \}}
\newcommand{\SET}[1]{\bigl\{ #1 \bigr\}}
\newcommand{\B}{{\cal B}}
\newcommand{\N}{{\cal N}}
\newcommand{\bigO}[1]{{\cal O}\bigl(#1\bigr)} 
\newcommand{\bigOO}[1]{{\cal O}(#1)} 
\newcommand{\Let}[3]%
    {\textbf{\textsf{let}}\ {#1}\,{#2}\ \textbf{\textsf{in}}\;{#3}\,}
\newcommand{\Try}[3]%
    {\textbf{\textsf{try}}\ {#1} {#2}\ \textbf{\textsf{in}}\;{#3}\;}
\newcommand{\Mix}[3]%
    {\textbf{\textsf{mix}}\ {#1} {#2}\ \textbf{\textsf{in}}\;{#3}\;}
\newcommand{\LET}[3]%
    {\textbf{\textsf{let}}^{\bm{*}}\ {#1} {#2}\ \textbf{\textsf{in}}\;{#3}\;}
\newcommand{\Letrec}[3]%
    {\textbf{\textsf{letrec}}\ {#1} {#2}\ \textbf{\textsf{in}}\;{#3}\;}
\newcommand{\degreeSym}{\mathit{deg}} 
\newcommand{\degr}[2]{{\degreeSym}_{#1}(#2)}
\newcommand{\ie}{\textit{i.e.}}
\newcommand{\viz}{\textit{viz.}}
\newcommand{\eg}{\textit{e.g.}}
\newcommand{\QED}{{\Large $\square$}} 
\newcommand{\nreals}{\mathbb{R}_{+}}
\newcommand{\reals}{\mathbb{R}}
\newcommand{\intervals}[1]{{\cal I}(#1)}
\newcommand{\bfmath}[1]{{\text{\large $\boldsymbol{#1}$}}}
\newcommand{\size}[1]{|\,#1\,|}
\newcommand{\Power}[1]{\mathscr{P}\big(#1\big)}
\newcommand{\rest}[2]{\bfmath{[}#1\,|\,#2\bfmath{]}}
\newcommand{\head}[1]{\text{\em head}(#1)} 
\newcommand{\tail}[1]{\text{\em tail}(#1)} 
\newcommand{\OutF}[1]{\mathsf{OuterFace}(#1)}
\newcommand{\OutPlan}[2]{{{#1}\text{-}\mathsf{outerplanarity}}(#2)}
\newcommand{\CC}{\mathscr{C}}
\newcommand{\transA}[1]{{#1}^{\star}} 
\newcommand{\vertices}[1]{{\mathbf{V}(#1)}}
\newcommand{\edges}[1]{{\mathbf{E}(#1)}}
\newcommand{\edgesSharp}[1]{{\mathbf{E}_{\#}(#1)}}
\newcommand{\edgesIn}[1]{{\mathbf{E}_{\rm in}(#1)}}
\newcommand{\edgesOut}[1]{{\mathbf{E}_{\rm out}(#1)}}
\newcommand{\edgesIO}[1]{{\mathbf{E}_{\rm io}(#1)}}
\newcommand{\upperB}{\overline{\it c}}
\newcommand{\lowerB}{\underline{\it c}}
\newcommand{\maxFromToSym}{{\mathsf{maxFrom\!To}}}
\newcommand{\maxFromToFn}[1]{{\maxFromToSym}_{#1}}
\newcommand{\maxFromTo}[2]{{\maxFromToSym}_{#1}\big(#2\big)}
\newcommand{\maxFromToAftSym}{{\mathsf{maxFrom\!ToAft}}}
\newcommand{\maxFromToAftFn}[1]{{\maxFromToAftSym}_{#1}}
\newcommand{\maxFromToAft}[3]{{\maxFromToAftSym}_{#1}\big(#2;#3\big)}
\newcommand{\Complement}[1]{\overline{#1}}
\begin{document}


\setcounter{page}{1}     
\setcounter{tocdepth}{1} 
\ifTR
  \pagenumbering{roman} 
\else
\fi

\title{A Fixed-Parameter Linear-Time Algorithm \\
       to Compute Principal Typings of Planar Flow Networks} 
\author{Assaf Kfoury%
           \thanks{Partially supported by NSF awards CCF-0820138
           and CNS-1135722.} \\
        Boston University \\
        \ifTR Boston, Massachusetts \\ 
        \href{mailto:kfoury@bu.edu}{kfoury{@}bu.edu}
        \else \fi
        \Hide{
          \and   Benjamin Sisson%
           \footnotemark[1] \\
        Boston University   \\
        \ifTR Boston, Massachusetts \\ 
        \href{mailto:bmsisson@bu.edu}{bmsisson{@}bu.edu}
        \else \fi
        }
}

\ifTR
   \date{\today}
\else
   \date{} %
\fi
\maketitle
  \ifTR
     \thispagestyle{empty} 
  \else
  \fi

\ifTR
    \tableofcontents
    \newpage
\else
    \vspace{-.2in}
\fi

  \begin{abstract}
  \addcontentsline{toc}{section}{Abstract}

\noindent
We present an alternative and simpler method for computing principal typings
of flow networks. When limited to \emph{planar} flow networks, the method
can be made to run in fixed-parameter linear-time -- where the parameter
not to be exceeded is what is called the \emph{edge-outerplanarity} of
the networks' underlying graphs.


  \end{abstract}

  \newpage
  \pagenumbering{arabic}  

\section{Introduction}
\label{sect:intro}

\emph{Network typings} are algebraic or arithmetic formulations of interface
conditions that network components must satisfy in order to
interconnect with each other safely and correctly. A particular use of
network typings is to quantify desirable properties related to
resource management (\eg, percentage ranges of channel utilization,
mean delays between routers, etc., as well as flow conservation and
capacity constraints along channels), and to enforce them as invariant
properties across network interfaces. For a given network component
$\N$, a \emph{principal typing} for $\N$ is the most general -- or also
the most precise -- in the sense that it subsumes all other sound typings of $\N$.
More on this use of network typings is in several reports~\cite[and
the references therein]%
{BestKfoury:dsl11,Kfoury:sblp11,Kfoury:SCP2014}.
Computing efficiently \emph{principal typings} of networks is an
underlying concern in all these studies; new ways of computing them
more efficiently, under various conditions, continue to be
investigated.

In this report, we consider one version of network typings, here
simplified to account for only one quantity (\viz, flow) and under
only one restriction (\viz, flow must remain within pre-determined
upper bounds along all channels). A formal definition of network
typings that fits this simplified version is in
Section~\ref{sect:preliminaries}.  Our method for computing such
network typings efficiently (and more simply) is based on what is
called \emph{graph reassembling}.  When the underlying graph $G$ of a
network $\N$ is planar, our method runs in fixed-parameter linear
time, where the parameter to be bounded is called
the \emph{edge-outerplanarity} of $G$.
We next explain these two notions: \emph{graph reassembling} and
\emph{edge-outerplanarity}.

One way of understanding the \emph{reassembling} of a simple
undirected graph $G$ is this: It is the process of cutting every edge of
$G$ in two halves, and then splicing the two halves of every edge, one
by one in some order, in order to recover the original $G$. We thus
start from one-vertex components, with one component for each vertex $v$
and each with $\degr{}{v}$ dangling half edges,%
\footnote{$\degr{}{v}$ is the degree of vertex $v$, \ie, the number of
  edges incident to $v$, both entering $v$ and exiting $v$.
  }
and then gradually reassemble larger and larger components of the
original $G$ until $G$ is fully reassembled. One optimization
associated with graph reassembling is to keep the number of dangling
half edges of each reassembled component as small as possible. Graph
reassembling and associated optimization problems are examined in
earlier reports on network
analysis~\cite{Kfoury:SCP2014,SouleBestKfouryLapets:eoolt11,%
kfoury+mirzaei:2017,kfoury+mirzaei:2017B}. A formal definition of
\emph{graph reassembling} -- different from, but
equivalent to, the preceding informal definition -- 
is in Section~\ref{sect:our-result}.

As for the notion of \emph{edge-outerplanarity} of planar graphs, it
is distinct but closely related to the usual notion of outerplanarity,
and was introduced in earlier studies for other purposes (\eg,
disjoint paths in sparse graphs, as in~\cite{bentz2009}). As with
outerplanarity, for a fixed edge-outerplanarity $k$, the number $n$ of
vertices in a planar graph can be arbitrarily large. Our main result
can be re-phrased thus: Our main result can be re-phrased thus: For
the class ${\CC}_k$ of planar flow networks whose edge-outerplanarity
is bounded by a fixed $k\geqslant 1$, there is an algorithm which,
given an arbitrary $\N\in{\CC}_k$, computes a principal typing for
$\N$ in time $\bigOO{n}$ where $n = \size{\N}$.


\section{Preliminary Notions}
\label{sect:preliminaries}

We review several standard notions, add new notions specially adapted
to our needs in this paper, and fix our notational conventions.


\subsection*{Flow Networks:}

A flow network is a pair of the form $\N = (G,c)$
where $G$ is a directed graph without self-loops and without
multi-edges (in the same direction),%
   \footnote{\label{foot:two-edge-cycle}
   However, $G$ may contain \emph{two-edge cycles},
   \ie, two edges $e_1$ and $e_2$ such that $\head{e_1} = \tail{e_2}$
   and $\tail{e_1} = \head{e_2}$.
   }
and $c: \edges{G}\to\nreals$ is a function that assigns
an \emph{upper-bound capacity} to every edge $e$.
We write $\vertices{G}$ and $\edges{G}$ for the set of
vertices and the set of edges of $G$, respectively.

For reasons that become clear later, we do not identify subsets of
$\vertices{G}$ as `sources' and `sinks' of $\N$, following usual
conventions.  Instead, we allow some members of $\edges{G}$ to be
`dangling' edges. An edge $e \in\edges{G}$ is \emph{dangling} if it is
incident to only one vertex $v\in\vertices{G}$, for which there are
two cases, where we write `$\bot$' to mean `undefined':
\begin{itemize}[itemsep=1pt,parsep=2pt,topsep=2pt,partopsep=0pt]
\item $\head{e}=v$ and $\tail{e} = \bot$, in which case $e$ is
      an \emph{input edge}, or
\item $\tail{e}=v$ and $\head{e} = \bot$, in which case $e$ is
      an \emph{output edge}.
\end{itemize}
 $\edgesIn{G}$ denotes the set of input edges and $\edgesOut{G}$
 the set of output edges.
An edge $e\in\edges{G}$ is \emph{not dangling} if it is incident to
two distinct vertices $v,w\in\vertices{G}$ with $v = \tail{e}$ and
$w = \head{e}$.  The set of edges that are \emph{not dangling} is
denoted $\edgesSharp{G}$.
The three sets $\SET{\edgesIn{G}, \edgesOut{G}, \edgesSharp{G}}$
form a $3$-part partition of $\edges{G}$, \ie, they are 
pairwise disjoint and:
\[
  \edges{G} = \edgesIn{G} \cup \edgesOut{G} \cup \edgesSharp{G} .
\]
We write $\edgesIO{G}$ for the union $\edgesIn{G}\cup\edgesOut{G}$.

As usual, a \emph{flow} in the network $\N$ is a function
$f : \edges{G}\to\nreals$. If $X\subseteq \edges{G}$, we write
$f(X)$ for the summation $\sum\Set{\,f(e)\,|\,e\in X\,}$.
The flow $f$ is \emph{feasible} if it satisfies
the two standard conditions:
\begin{itemize}[itemsep=1pt,parsep=2pt,topsep=2pt,partopsep=0pt]
\item \emph{flow conservation} at every vertex $v\in\vertices{G}$,
       \ie, if $X$ and $Y$ are all the edges entering $v$ and
       exiting $v$, respectively, then $f(X) = f(Y)$,
\item \emph{capacity constraint} at every edge $e\in\edges{G}$,
      \ie, $ f(e) \leqslant c(e)$.
\end{itemize}
An \emph{input-output assignment} (or an \emph{IO assignment}) for the
network $\N = (G,c)$ is a function $g : \edgesIO{G}\to\nreals$.
The \emph{restriction} of a flow $f :\edges{G}\to\nreals$ to the subset
$\edgesIO{G}\subseteq\edges{G}$, denoted $\rest{f}{\edgesIO{G}}$, is an
IO assignment.
The \emph{value} of the flow $f$, denoted $\size{f}$, is
$f\big(\edgesIn{G}\big)$ or, equivalently, $f\big(\edgesOut{G}\big)$.

If $f, f' : \edges{G}\to\nreals$ are two flows in $\N$, then $f+f'$
denotes their sum: 
$(f+f')(e) \triangleq f(e) + f'(e)$ for every edge $e\in\edges{G}$.


\subsection*{Network Typings:}

Let $\Power{\edgesIO{G}}$ be the powerset of $\edgesIO{G}$
and $\intervals{\reals}$ the set of closed real intervals:
\[
  \Power{\edgesIO{G}} \triangleq \SET{\,X\;\big|\;X\subseteq \edgesIO{G}\,}
  \quad\text{and}\quad
  \intervals{\reals} \triangleq
  \SET{\,[r_1,r_2]\;\big|\;r_1,r_2\in\reals\text{ and } r_1\leqslant r_2 \,}.
\]  
A \emph{typing} for the network $\N = (G,c)$ is a map $\tau$ of the form:
\[
         \tau: \Power{\edgesIO{G}} \to\intervals{\reals}.
\]
 If $X \subseteq \edgesIO{G}$ with
 $A = X\cap \edgesIn{G}$ and $B = X\cap \edgesOut{G}$,
 we may write $\tau(A,B)$ instead of $\tau(X)$.

 An IO assginment $g: \edgesIO{G}\to \nreals$ \emph{satisfies}
 the typing $\tau: \Power{\edgesIO{G}} \to\intervals{\reals}$ iff
 for every $A \subseteq \edgesIn{G}$ and every $B \subseteq \edgesOut{G}$
 it holds that:
\[
  g(A) - g(B) \in \tau(A,B).
\]
We can view the difference $g(A) - g(B)$ as expressing the excess flow
that enters at $A$ but does not exit from $B$, which may be positive
or negative. Only when $A = \edgesIn{G}$ and $B = \edgesOut{G}$ do we
have $g(A) - g(B) = 0$.

A flow $f: \edges{G}\to \nreals$ \emph{satisfies} the typing $\tau$ if
its restriction $\rest{f}{\edgesIO{G}}$ satisfies $\tau$.

\begin{definition} 
{Principal Typings}
\label{def:principal}
A typing $\tau: \Power{\edgesIO{G}} \to\intervals{\reals}$ for the
network $\N = (G,c)$ is \emph{principal} iff two conditions are satisfied:
\begin{itemize}[itemsep=1pt,parsep=2pt,topsep=2pt,partopsep=0pt]
\item If a flow $f: \edges{G}\to \nreals$ is feasible, then
      $f$ satisfies $\tau$.
\item If an IO assignment $g: \edgesIO{G}\to \nreals$ satisfies $\tau$,
      then $g$ can be extended to a feasible flow, \\
      \ie, there is feasible flow $f$ such that $g = \rest{f}{\edgesIO{G}}$.
\end{itemize}
The first condition is the \emph{completeness} of $\tau$, the second
condition is the \emph{soundness} of $\tau$. A minimum requirement on
any typing $\tau$ for $\N$ is that it be sound; if $\tau$ is also complete
for $\N$, and therefore principal for $\N$, then $\tau$ is the `most precise'
formulation of the condition for connecting $\N$ with other networks.
\end{definition} 


\subsection*{Two Special Functions:}

Relative to a flow network $\N = (G,c)$, we define two functions written as:
\[
        \maxFromTo{\N}{A_1,B_1}
          \quad\text{and}\quad
        \maxFromToAft{\N}{A_1,B_1}{A_2,B_2},
\]
where $A_1, A_2\subseteq\edgesIn{G}$ and $B_1,B_2\subseteq\edgesOut{G}$.
Whenever `$\N$' is understood from the context,
we omit the subscript `$\N$' and write instead:
\[
        \maxFromTo{}{A_1,B_1}
          \quad\text{and}\quad
        \maxFromToAft{}{A_1,B_1}{A_2,B_2}.
\]
The meaning of the first function is given by~(\ref{eq:a})
or~(\ref{eq:b}):
  \begin{alignat}{5}
  \label{eq:a}
    & \maxFromTo{}{A,B}\triangleq
  \max\;\SET{\,\HItxt{$f(A)$}\;\big|\;\text{$f:\edges{G}\to\nreals$ is feasible
      and $f(\Complement{A}) = f(\Complement{B}) = 0$}\,},
    \\[1.01ex]
  \label{eq:b}
    & \maxFromTo{}{A,B}\triangleq
  \max\;\SET{\,\HItxt{$f(B)$}\;\big|\;\text{$f:\edges{G}\to\nreals$ is feasible
      and $f(\Complement{A}) = f(\Complement{B}) = 0$}\,}.
  \end{alignat}
(\ref{eq:a}) and~(\ref{eq:b}) are identical except for the highlighted parts.
It is an easy exercise (omitted) to show~(\ref{eq:a})
and~(\ref{eq:b}) are equivalent definitions.%
    \footnote{There are different ways of proving the equivalence
    of~(\ref{eq:a}) and~(\ref{eq:b}). One particular simple way
    is by induction on the number $m$ of edges for a fixed number
    $n$ of vertices. Another simple way is to remove all input
    edges in $\Complement{A}$ and all output edges in $\Complement{B}$,
    then join all input edges in $A$ to a fresh input edge $e_{\text{in}}$
    and all output edges in $B$ to a fresh output edge $e_{\text{out}}$,
    and then consider maximum flows from $e_{\text{in}}$ to
    $e_{\text{out}}$ in the thus-modified network.
    }
Informally, 
$\maxFromTo{}{A,B}$ is the value of a maximum flow
$f:\edges{G}\to\nreals$ from $A\subseteq\edgesIn{G}$ to
$B\subseteq\edgesOut{G}$ when flow is blocked from entering
$\Complement{A}$ and from exiting $\Complement{B}$.

\Hide{ 

The meaning of the first function is as follows, if there are both a lower
bound capacity and an upper bound capacity:

THE FOLLOWING WILL NOT WORK. IT SEEMS YOU HAVE TO GO FROM feasible flows TO
feasible path-assignments.

Check out: main.pdf in directory Generalized Max Flow.

\begin{itemize}
\item $\maxFromTo{}{A,B}$ is the value of a
  maximum flow $f:\edges{G}\to\nreals$
  from $A\subseteq\edgesIn{G}$ to $B\subseteq\edgesOut{G}$:
    \begin{alignat}{5}
    \label{eq:1}
    & \maxFromTo{}{A,B}\triangleq
    \\
    &
  \max\SET{\,f(A)-f(\Complement{B})
  \;\big|\;\text{$f:\edges{G}\to\nreals$ is feasible}\,}
    \nonumber
    \\[2.2ex]
    \label{eq:2}
    & \maxFromTo{}{A,B}\triangleq
    \\
    &
  \max\SET{\,f(B)-f(\Complement{A})
  \;\big|\;\text{$f:\edges{G}\to\nreals$ is feasible}\,}
    \nonumber
  \end{alignat}
\end{itemize}          
The two definition are equivalent because
$ f(A)+f(\Complement{A}) = f(B)+f(\Complement{B}) $ for all
feasible $f:\edges{G}\to\nreals$.
Informally, $\maxFromTo{}{A,B}$ returns the value of a
maximum flow from $A$ to $B$.
}

In the case of the second function $\maxFromToAft{}{A_1,B_1}{A_2,B_2}$,
it will always be the case that:
\begin{itemize}[itemsep=1pt,parsep=2pt,topsep=2pt,partopsep=0pt]
\item[] \emph{either} $A_1\cap A_2 =\varnothing$ and $B_1 = B_2$,
\item[] \emph{or}\quad\ \ $A_1 = A_2$ and $B_1\cap B_2 =\varnothing$.
\end{itemize}
For the first of these two cases,
the meaning of $\maxFromToAft{}{A_1,B}{A_2,B}$ is given by~(\ref{eq:c})
or~(\ref{eq:d}), where $A_1, A_2 \subseteq\edgesIn{G}$
and $A_1\cap A_2 =\varnothing$:
  \begin{alignat}{5}
  \label{eq:c}
    & \maxFromToAft{}{A_1,B}{A_2,B}\triangleq 
    \,\max \big\{\,\HItxt{$f(A_1)$}
    \;\big|\; && \text{$f+f':\edges{G}\to\nreals$ is feasible
      for some flow $f'$ }
    \\
    & &&
    \text{such that $f'(A_2) = f'(B) = \maxFromTo{}{A_2,B}$ } 
  \nonumber
    \\
    & &&
    \text{and $(f+f')(\Complement{A_1\cup A_2}) = (f+f')(\Complement{B}) = 0$}
    \,\big\},
  \nonumber
    \\[1.091ex]
  \label{eq:d}
    & \maxFromToAft{}{A_1,B}{A_2,B}\triangleq 
    \,\max \big\{\, \HItxt{$f(B)$}
    \;\big|\; && \text{$f+f':\edges{G}\to\nreals$ is feasible 
      for some flow $f'$ }  
    \\
    & &&
    \text{such that $f'(A_2) = f'(B) = \maxFromTo{}{A_2,B}$ } 
  \nonumber
    \\
    & &&
    \text{and $(f+f')(\Complement{A_1\cup A_2}) = (f+f')(\Complement{B}) = 0$}
    \,\big\}.
  \nonumber  
\end{alignat}
(\ref{eq:c}) and~(\ref{eq:d}) are identical except for the highlighted parts.
For the second case of the function $\maxFromToAftSym$,
the meaning of $\maxFromToAft{}{A,B_1}{A,B_2}$ is given by~(\ref{eq:e})
or~(\ref{eq:f}), where $B_1, B_2 \subseteq\edgesOut{G}$
and $B_1\cap B_2 =\varnothing$:
  \begin{alignat}{5}
  \label{eq:e}
    & \maxFromToAft{}{A,B_1}{A,B_2}\triangleq 
    \,\max \big\{\,\HItxt{$f(B_1)$}
    \;\big|\; && \text{$f+f':\edges{G}\to\nreals$ is feasible
      for some flow $f'$ }
    \\
    & &&
    \text{such that $f'(A) = f'(B_2) = \maxFromTo{}{A,B_2}$ } 
  \nonumber
    \\
    & &&
    \text{and $(f+f')(\Complement{A}) = (f+f')(\Complement{B_1\cup B_2}) = 0$}
    \,\big\},
  \nonumber
    \\[1.091ex]
  \label{eq:f}
    & \maxFromToAft{}{A,B_1}{A,B_2}\triangleq 
    \,\max \big\{\,\HItxt{$f(A)$}
    \;\big|\; && \text{$f+f':\edges{G}\to\nreals$ is feasible
      for some flow $f'$ }
    \\
    & &&
    \text{such that $f'(A) = f'(B_2) = \maxFromTo{}{A,B_2}$ } 
  \nonumber
    \\
    & &&
    \text{and $(f+f')(\Complement{A}) = (f+f')(\Complement{B_1\cup B_2}) = 0$}
    \,\big\},
  \nonumber
\end{alignat}
(\ref{eq:e}) and~(\ref{eq:f}) are identical except for the highlighted parts.
Just as~(\ref{eq:a}) and~(\ref{eq:b}) are equivalent, so 
too~(\ref{eq:c}) and~(\ref{eq:d}) are equivalent, and~(\ref{eq:e})
and~(\ref{eq:f}) are equivalent, and by the same reasoning.

Informally, the meaning of $\maxFromToAft{}{A_1,B}{A_2,B}$ and
$\maxFromToAft{}{A,B_1}{A,B_2}$ is as follows:
\begin{itemize}[itemsep=1pt,parsep=2pt,topsep=2pt,partopsep=0pt]
\item
      $\maxFromToAft{}{A_1,B}{A_2,B}$ returns the value of a maximum
      flow \emph{from $A_1$ to $B$, after} \\ a maximum flow has been
      already directed \emph{from $A_2$ to $B$},
\item
      $\maxFromToAft{}{A,B_1}{A,B_2}$ returns the value of a maximum
      flow \emph{from $A$ to $B_1$, after} \\ a maximum flow has been
      already directed \emph{from $A$ to $B_2$}.
\end{itemize}
The following lemma is used in the induction in Section~\ref{sect:reassembling}.

\begin{lemma}
\label{lem:relating-the-two-functions}
The functions $\maxFromToSym$ and
$\maxFromToAftSym$ are related by the following equalities:
\begin{enumerate}
\item[$(\dag)$]
   For all $A_1,A_2\subseteq\edgesIn{G}$ and $B\subseteq\edgesOut{G}$ such that
   $A_1\cap A_2 = \varnothing$:
   \[
   \maxFromToAft{}{A_1,B}{A_2,B}\ =
   \ \maxFromTo{}{A_1\cup A_2,B} - \maxFromTo{}{A_2,B}
   \]
\item[$(\ddag)$]
   For all $A\subseteq\edgesIn{G}$ and $B_1,B_2\subseteq\edgesOut{G}$
   such that $B_1\cap B_2 = \varnothing$:
   \[
   \maxFromToAft{}{A,B_1}{A,B_2}\ =
   \maxFromTo{}{A,B_1\cup B_2} - \maxFromTo{}{A,B_2}.
   \]
\end{enumerate}
\end{lemma}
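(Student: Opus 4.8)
The plan is to prove $(\dag)$ in full and obtain $(\ddag)$ by the symmetric argument (reverse every edge of $G$, interchanging the roles of $\edgesIn{G}$ and $\edgesOut{G}$; equivalently, repeat the argument verbatim using the output-side definitions~(\ref{eq:e}) and~(\ref{eq:f})). Write $M_{12}=\maxFromTo{}{A_1\cup A_2,B}$, $M_2=\maxFromTo{}{A_2,B}$, and $D=\maxFromToAft{}{A_1,B}{A_2,B}$; the goal is $D=M_{12}-M_2$. Throughout I read a ``blocked'' flow from $A$ to $B$ (one with $f(\Complement{A})=f(\Complement{B})=0$) as an ordinary flow from sources $A$ to sinks $B$, using the super-source/super-sink reformulation of the footnote to~(\ref{eq:a}) and~(\ref{eq:b}): adjoin a vertex $s$ feeding exactly the input edges in $A$ and a vertex $t$ drained by exactly the output edges in $B$, so that $\maxFromTo{}{A,B}$ becomes an ordinary $s$--$t$ maximum-flow value. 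I first record two facts. (i) If $f$ is any flow from $A_1\cup A_2$ to $B$, then decomposing $f$ into source-to-sink paths and collecting those entering through $A_2$ yields a feasible flow from $A_2$ to $B$ of value $f(A_2)$; hence $f(A_2)\leqslant M_2$ for every such $f$. (ii) Any $f'$ admissible in the definition of $D$ satisfies $f'(A_1)=0$: the constraint $(f+f')(\Complement{A_1\cup A_2})=0$ together with $f,f'\geqslant 0$ forces $f'$ to vanish on all input edges outside $A_1\cup A_2$, and then conservation with $f'(A_2)=f'(B)=M_2$ forces $f'(A_1)=0$.

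For the upper bound $D\leqslant M_{12}-M_2$, take any admissible pair $(f,f')$. Then $f+f'$ is a feasible flow from $A_1\cup A_2$ to $B$, so its value $(f+f')(A_1\cup A_2)$ is at most $M_{12}$. By fact (ii) and $f'(A_2)=M_2$ this value equals $f(A_1)+f(A_2)+M_2$, whence $f(A_1)\leqslant M_{12}-M_2-f(A_2)\leqslant M_{12}-M_2$; maximising over admissible pairs gives the bound.

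The lower bound $D\geqslant M_{12}-M_2$ is the substantive direction, and the crux is to exhibit a maximum flow $h$ from $A_1\cup A_2$ to $B$ that simultaneously routes the full amount $M_2$ through $A_2$, i.e.\ with $h(A_2)=M_2$. I obtain it as follows. Fix a maximum flow $f'$ from $A_2$ to $B$ (value $M_2$), viewed inside the larger network with sources $A_1\cup A_2$, and pass to its residual network, whose $s$--$t$ maximum-flow value equals $M_{12}-M_2$. Choose an acyclic maximum residual flow $\phi$ and decompose it into simple $s$--$t$ paths; since each such path leaves $s$ exactly once along a forward residual edge and never uses a residual edge directed into $s$ (that would revisit $s$), augmenting by $\phi$ only increases, never decreases, the flow on the edges leaving $s$. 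Consequently the resulting maximum flow $h$ (value $M_{12}$) satisfies $h(A_2)\geqslant f'(A_2)=M_2$, and combined with fact (i) this yields $h(A_2)=M_2$ exactly. This monotonicity of the source-edge flow under augmentation is the key point, and it is precisely what lets me avoid any appeal to the termination of a particular augmenting-path schedule for real capacities.

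With $h$ in hand the rest is bookkeeping. Decompose $h$ into source-to-sink paths and let $f''$ be the sub-flow formed by the paths entering through $A_2$; then $0\leqslant f''\leqslant h$, and $f''$ is a flow from $A_2$ to $B$ of value $h(A_2)=M_2$, so $f''$ is admissible as the already-committed flow $f'$ in the definition of $D$, with $f''(A_1)=0$ by construction. Set $f=h-f''\geqslant 0$. Then $f+f''=h$ is feasible and blocked outside $A_1\cup A_2$ and $B$, and $f(A_1)=h(A_1)-f''(A_1)=h(A_1)=M_{12}-h(A_2)=M_{12}-M_2$. Thus the pair $(f,f'')$ witnesses $D\geqslant M_{12}-M_2$, and with the upper bound this proves $(\dag)$. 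The only step demanding genuine care is the residual-augmentation argument of the previous paragraph; everything else is flow decomposition and conservation.
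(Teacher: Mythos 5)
Your proof is correct, and it takes a genuinely different route from the paper's. The paper offers only a proof sketch: it asserts that $(\dag)$ is ``easy to see by conservation of flow through the network,'' and indicates that a formal proof would establish $(\dag)$ as an invariant of the inductive reassembling of Section~\ref{sect:reassembling}, from the one-vertex components up to the full network $\N$. You instead give a direct, self-contained argument from classical max-flow theory: the super-source/super-sink reformulation, flow decomposition into simple paths, and residual networks. The comparison is instructive, because conservation alone yields only your easy upper bound $D \leqslant M_{12} - M_2$; the substantive content is the lower bound, namely exhibiting a maximum flow from $A_1 \cup A_2$ to $B$ that still routes exactly $M_2 = \maxFromTo{}{A_2,B}$ through $A_2$. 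Your residual-augmentation argument (an acyclic maximum flow in the residual network decomposes into simple $s$--$t$ paths, none of which can traverse a residual edge directed into $s$, so augmenting never decreases flow on the source edges) supplies precisely the step that the paper's sketch glosses over, and it works for arbitrary real capacities. What your route buys: a complete proof that is independent of the reassembling machinery, so the lemma can be invoked to justify the induction of Section~\ref{sect:reassembling} without circularity. What the paper's route would buy, if carried out: $(\dag)$ would be verified simultaneously with the correctness of the inductive computation of $\maxFromToFn{{\N}_k}$, though only for the components arising in that induction. One small repair to your write-up: fact (ii) assumes that $f'$ obeys flow conservation, but under the paper's definitions a ``flow'' is merely a nonnegative function on edges and only $f + f'$ is required to be feasible, so $f'$ need not conserve flow. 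This is harmless, since your upper bound survives without fact (ii): $h(A_1 \cup A_2) \leqslant M_{12}$ together with nonnegativity of $f'(A_1)$ and $f(A_2)$ already gives $f(A_1) \leqslant M_{12} - M_2$.
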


\begin{sketch}
The proof of $(\dag)$ and $(\ddag)$ are essentially the same, and it suffices
to focus on $(\dag)$. Hence, from (\ref{eq:c}) we
need to show that $(\dag)$ is true; in fact, what is more, (\ref{eq:c})
and $(\dag)$ imply each other. This is easy to see by conservation
of flow through the network. A more formal proof is to prove the equivalence
of (\ref{eq:c}) and $(\dag)$ for every component of $\N$ as it is reassembled
inductively in Section~\ref{sect:reassembling} -- said differently still,
given the definition in (\ref{eq:c}), the equality $(\dag)$ is an invariant
of the induction -- starting with the one-vertex
components and finishing with the full network $\N$.
\end{sketch}

The next lemma is used in the proof of our main result,
Theorem~\ref{thm:our-result}.

\begin{lemma}
\label{lem:typing}
  Let $\tau: \Power{\edgesIO{G}} \to\intervals{\reals}$ be the
  principal typing of the flow network $\N = (G,c)$.
  For all $A\subseteq\edgesIn{G}$ and $B\subseteq\edgesOut{G}$, it holds 
  that: 
\[
   \tau(A,B) = [r_1,r_2] 
     \quad\text{iff}\quad
   r_1 = -\maxFromTo{}{\Complement{A},B}
   \text{\ and\ }   r_2 = \maxFromTo{}{A,\Complement{B}} .
\]
\end{lemma}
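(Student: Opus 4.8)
The plan is to identify the interval $\tau(A,B)$ with the set of excess-flow values that feasible flows actually realize, and then to compute the two endpoints of that set as maximum-flow values. First I would introduce
\[
  V(A,B)\ \triangleq\ \SET{\,f(A)-f(B)\;\big|\;f:\edges{G}\to\nreals\text{ is feasible}\,},
\]
which, since $A\subseteq\edgesIO{G}$ and $B\subseteq\edgesIO{G}$, coincides with $\Set{g(A)-g(B)\mid g=\rest{f}{\edgesIO{G}}\text{ for some feasible }f}$. Because the feasible flows form a nonempty compact convex polytope (the constraints $0\le f(e)\le c(e)$ together with conservation are bounded and linear) and $f\mapsto f(A)-f(B)$ is linear, $V(A,B)$ is a compact interval $[\,\inf V(A,B),\ \sup V(A,B)\,]$. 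The lemma then reduces to two claims: (i) $\tau(A,B)=V(A,B)$, and (ii) $\sup V(A,B)=\maxFromTo{}{A,\Complement{B}}$ together with $\inf V(A,B)=-\maxFromTo{}{\Complement{A},B}$.

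For claim (i) I would use that $\tau$ is principal. Completeness gives $V(A,B)\subseteq\tau(A,B)$: every feasible $f$ satisfies $\tau$, so $f(A)-f(B)\in\tau(A,B)$. For the reverse inclusion I would use soundness: a value $v\in\tau(A,B)$ witnessed by an IO assignment $g$ satisfying $\tau$ with $g(A)-g(B)=v$ extends, by soundness, to a feasible flow $f$ with $f(A)-f(B)=v$, whence $v\in V(A,B)$. The delicate point is the existence of such a witnessing $g$ that simultaneously meets \emph{every} other constraint of $\tau$; this is precisely the statement that the principal typing records the tightest, non-redundant interval for each pair $(A,B)$, and I would invoke here the uniqueness of the principal typing (equivalently, that $\tau$ is the pointwise-smallest sound typing that remains complete).

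The heart of the argument is claim (ii), which is purely flow-theoretic and I would prove by flow decomposition. Decompose an arbitrary feasible flow $f$ into nonnegative path flows, each running from an input edge to an output edge, together with cycle flows; the cycles contribute nothing to $f(A)$ or $f(B)$. Grouping the path flows according to whether their initial input edge lies in $A$ or in $\Complement{A}$ and whether their terminal output edge lies in $B$ or in $\Complement{B}$, and writing $p_2$ for the total flow on $A\!\to\!\Complement{B}$ paths and $p_3$ for the total flow on $\Complement{A}\!\to\!B$ paths, one gets $f(A)-f(B)=p_2-p_3\le p_2$. Now the sub-flow consisting only of the $A\!\to\!\Complement{B}$ path flows is itself feasible (it is dominated by $f$ edgewise, hence respects capacities, and each path conserves flow), is zero on $\Complement{A}$ and on $B$, and has value $p_2$; so $p_2\le\maxFromTo{}{A,\Complement{B}}$ by definition of $\maxFromToSym$. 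This proves $\sup V(A,B)\le\maxFromTo{}{A,\Complement{B}}$, and the reverse inequality is immediate because a flow realizing $\maxFromTo{}{A,\Complement{B}}$ is zero on $B$, so for it $f(A)-f(B)=\maxFromTo{}{A,\Complement{B}}$. The identity $\inf V(A,B)=-\maxFromTo{}{\Complement{A},B}$ then follows by the symmetric computation applied to $f(B)-f(A)=p_3-p_2$.

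The main obstacle is the reverse inclusion in claim (i): with only the bare completeness/soundness definition, a principal typing could a priori carry a strictly larger interval on some redundant pair $(A,B)$, so one must establish that the principal typing is the tightest such map, or argue non-redundancy of each constraint directly (as already happens in the one-input/one-output example, where relaxing the equality constraint $\tau(\edgesIn{G},\edgesOut{G})$ destroys soundness). Once claim (i) is pinned down, claim (ii) is routine flow decomposition, and chaining the two gives $\tau(A,B)=[-\maxFromTo{}{\Complement{A},B},\ \maxFromTo{}{A,\Complement{B}}]$, which is exactly the asserted equivalence.
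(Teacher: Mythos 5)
Your claim (ii) is correct and well executed: the flow-decomposition argument showing that
$V(A,B)=\Set{\,f(A)-f(B)\mid \text{$f$ feasible}\,}$
is a closed interval (compact convex polytope, linear objective) whose endpoints are $-\maxFromTo{}{\Complement{A},B}$ and $\maxFromTo{}{A,\Complement{B}}$ is a clean formalization of the ``flow conservation'' step that the paper only gestures at --- the paper's own proof is a two-sentence sketch, deferring either to the definitions or to an induction along the reassembling of Section~\ref{sect:reassembling}. The genuine gap is in your claim (i), exactly where you flag it, and your proposed repair does not close it. The identification $\tau(A,B)=V(A,B)$ cannot follow from Definition~\ref{def:principal} (soundness $+$ completeness) alone, because sound-and-complete typings are \emph{not} unique. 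Concretely: take one vertex with a single input edge $e_1$ and a single output edge $e_2$, both of capacity $1$, and let $\tau'$ agree with $V$ except that $\tau'(\Set{e_1},\varnothing)=[0,2]$. Enlarging an interval cannot destroy completeness; and the satisfying IO assignments are still exactly $\Set{\,g \mid g(e_1)=g(e_2)\in[0,1]\,}$, because the constraints $\tau'(\Set{e_1},\Set{e_2})=[0,0]$ and $\tau'(\varnothing,\Set{e_2})=[-1,0]$ still pin them down; so $\tau'$ is sound and complete, yet violates the lemma's formula, since $\maxFromTo{}{\Set{e_1},\Complement{\varnothing}}=1\neq 2$. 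Hence ``invoke the uniqueness of the principal typing'' is not an available move, and your witnessing-$g$ step is not a smaller sub-problem but the full statement in disguise: if $\tau(A,B)$ strictly contains $V(A,B)$, then by soundness no satisfying $g$ attains the extra values, so the witness you need cannot exist.

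What the lemma actually requires --- and what the paper leaves implicit, inheriting it from its earlier reports --- is that $V$ itself is a \emph{sound} typing: every IO assignment $g$ with $g(A)-g(B)\in V(A,B)$ for \emph{all} pairs $(A,B)$ extends to a feasible flow. Granting that, $V$ is sound and complete, and reading ``the principal typing'' as the most precise (pointwise-smallest) sound-and-complete typing gives the lemma: completeness of $\tau$ forces $V\subseteq\tau$ pointwise, minimality forces the reverse, and your claim (ii) supplies the endpoints. Soundness of $V$ is a real theorem of Gale--Hoffman type, not a formality: one shows that the typing constraints imply, for every vertex set $S$ with attached input edges $A_S$ and output edges $B_S$, the cut condition $g(A_S)-g(B_S)\leqslant \maxFromTo{}{A_S,\Complement{B_S}}\leqslant c\bigl(\delta^{\mathrm{out}}(S)\bigr)$, and then Hoffman's circulation theorem yields the extension; alternatively one proves the formula as an invariant of the reassembling induction of Section~\ref{sect:reassembling}, which is the route the paper's sketch suggests. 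Until one of these is supplied, your proof establishes only that the asserted interval equals $V(A,B)$, not that the principal typing equals it.
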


\begin{sketch}
Somewhat informally, using flow conservation through the network, this
is a straightforward consequence of the definitions of `network
typings' and the function $\maxFromToFn{\N}$. More formal, but less
transparent, is a proof by induction, as $\N$ is reassembled
inductively from the one-vertex components to the full network
$\N$, as in Section~\ref{sect:reassembling}. All formal details omitted.
\end{sketch}


\section{Reassembling the Network}
\label{sect:reassembling}

 Given a flow network $\N = (G,c)$, let
 $m = \size{\edgesSharp{G}}$ and $n = \size{\edges{G}}$. Note that $m$
 does not include a count of the edges in
 $\edgesIn{G}\cup\edgesOut{G}$. Starting from $n$ one-vertex
 components, which we denote:
 \[
   {\N}_1 = (G_1,c),\quad {\N}_2 = (G_2,c),\ \ \ldots\ \ ,
   \quad {\N}_n = (G_n,c),
 \]
 with one for each of the $n$ vertices,
 we splice the two halves of each of the $m$ edges in $\edgesSharp{G}$,
 one by one in some order, until the full network $\N = (G,c)$ is
 reassembled:
  \[
   {\N}_{n+1} = (G_{n+1},c),\quad {\N}_{n+2} = (G_{n+2},c),\ \ \ldots\ \ ,
   \quad {\N}_{n+m} = (G_{n+m},c),
 \]
 where  ${\N}_{n+m} = \N$. For every $i\geqslant n+1$, the graph $G_i$ is 
 directed and \emph{connected}, though not necessarily \emph{strongly connected},
 and has at least two vertices.

 For every $k = n+1,\ldots,n+m$, the new network component ${\N}_k$ is the
 result of splicing the two dangling halves of some non-dangling edge $e$ in
 the initial $G$. If the two halves of $e$ are $e_1$ and $e_2$, then the
 new ${\N}_k$ is related to the preceding network components
 $\Set{{\N}_1,\ldots, {\N}_{k-1}}$ in one of two ways:
 \begin{description}[itemsep=1pt,parsep=2pt,topsep=2pt,partopsep=0pt]
 \item[Case 1:]\label{case1}
   There are two distinct network components ${\N}_i$
   and ${\N}_j$ such that $i < j < k$, with $e_1$ an input (or output)
   edge in ${\N}_i$ and $e_2$ an output (or, resp., input) edge in
   ${\N}_j$.
 \item[Case 2:]\label{case2}
   There is one network component ${\N}_i$
   such that $i < k$, with both $e_1$ an input (or output) edge 
   and $e_2$ an output (or, resp., input) edge in ${\N}_i$.   
 \end{description}
 For every $i\in\Set{1,\ldots,n+m}$, define the
 quantities:
 \[
 p_i \triangleq \size{\edgesIn{G_i}}
 \quad\text{and}\quad
 q_i \triangleq \size{\edgesOut{G_i}}.
 \]
 Thus, $p_i+q_i$ is the total number of dangling edges (input edges
 and output edges) in $G_i$, what is also called the
 \emph{edge-boundary degree} of $G_i$ (the number of edges that
 connect vertices inside $G_i$ with vertices outside $G_i$).
 
 We do not worry now about the order in which the reassembling is
 carried out in this section. Later we specify an order with which we 
 obtain the result claimed in the report's title. Define:
 \[
 \delta\, \triangleq
 \ \max\, \Set{\,p_i+q_i\;|\; 1\leqslant i\leqslant n+m\,} .
 \]
 Thus, $\delta$ is the least upper bound on the edge-boundary degrees of
 $\Set{G_1,\ldots,G_{n+m}}$. Our next task is to determine the function
 $\maxFromToFn{{\N}_i}$ for every $i = 1,\ldots,n+m$. We do this by
 induction on $i$.

 \paragraph{Basis step.} Let $i \in \Set{1,\ldots,n}$. Each ${\N}_i = (G_i,c)$ is
 a one-vertex component. If $\vertices{G_i} = \Set{v}$, then
 $p_i+q_i = \degr{}{v}$. It is straightforward to compute
 $\maxFromTo{{\N}_i}{A,B}$ for every $A\subseteq\edgesIn{G_i}$
 and every $B\subseteq\edgesOut{G_i}$. All details omitted.

 \paragraph{Induction hypothesis (IH).}
 Let $k \in \Set{n+1,\ldots,n+m-1}$.  For every $i \leqslant k-1$,
 assume $\maxFromTo{{\N}_i}{A,B}$ has been already determined for
 every $A\subseteq\edgesIn{G_i}$ and every $B\subseteq\edgesOut{G_i}$.

 \paragraph{Induction step.}
 Let $k \in \Set{n+1,\ldots,n+m-1}$.  We determine
 $\maxFromToFn{{\N}_k}$ using IH. Let ${\N}_k$ be obtained from
 $\Set{{\N}_1,\ldots, {\N}_{k-1}}$ by splicing the two halves, $e_1$
 and $e_2$, of some original edge $e\in\edgesSharp{G}$.  We consider
 the two cases identified earlier in this section separately.
 
 \textbf{Case 1:}
   With no loss of generality, suppose $e_1\in\edgesIn{G_i}$
   and $e_2\in\edgesOut{G_j}$. Hence:
   \begin{alignat*}{5}
   & \edgesIn{G_k} &&=
       \ \ && \big(\edgesIn{G_i} - \Set{e_1}\big) \cup \edgesIn{G_j} ,
   \\[1.32ex]
   & \edgesOut{G_k} &&=
       \ \ && \edgesOut{G_i} \cup \big(\edgesOut{G_j} - \Set{e_2}\big) ,
   \\[1.32ex]
   & \edgesSharp{G_k} &&=
       \ \ && \edgesSharp{G_i} \cup \edgesSharp{G_j} \cup \Set{e}. 
   \end{alignat*}
 Consider arbitrary $A\subseteq\edgesIn{G_k}$ and $B\subseteq\edgesOut{G_k}$
 and let:
   \begin{alignat*}{5} 
   & A = A_1\cup A_2 \quad &&\text{and}\quad
     && B = B_1\cup B_2, \quad\text{where}
   \\[.91ex]
   & A_1\subseteq\edgesIn{G_i} - \Set{e_1}\quad &&\text{and}\quad
     && B_1\subseteq\edgesOut{G_i},
   \\[.91ex]   
   & A_2\subseteq\edgesIn{G_j} \quad &&\text{and}\quad
     && B_2\subseteq\edgesOut{G_j} - \Set{e_2} . 
   \end{alignat*}
 We then define:
   \begin{alignat*}{5} 
   & \maxFromTo{{\N}_k}{A,B}\ \triangleq\ \ 
   \\[1.01ex]   
   &\qquad \maxFromTo{{\N}_i}{A_1,B_1} + \maxFromTo{{\N}_j}{A_2,B_2}\ +
   \\[1.01ex]
   &\qquad \min\,\SET{\;
    \maxFromToAft{{\N}_i}{e_1,B_1}{A_1,B_1},
    \ \maxFromToAft{{\N}_j}{A_2,e_2}{A_2,B_2}\;}
   \end{alignat*}
The first line after `$\triangleq$' is the part of the maximum flow from $A$
to $B$ that does not use the edge $e$; the second line after `$\triangleq$'
is the part of the maximum flow from $A$ to $B$ that does use the edge $e$.
We have thus defined $\maxFromToFn{{\N}_k}$ in terms of the already-defined, by
IH, the functions $\maxFromToFn{{\N}_i}$ and $\maxFromToFn{{\N}_j}$, also invoking
Lemma~\ref{lem:relating-the-two-functions} which gives us 
$\maxFromToAftFn{}$ in terms of $\maxFromToFn{}$. The value of
$\maxFromTo{{\N}_k}{A,B}$ is obtained by using twice `$+$', once `$\min$',
and twice `$-$' for the invocations of $\maxFromToAft{{\N}_i}{e_1,B_1}{A_1,B_1}$
and $\maxFromToAft{{\N}_j}{A_2,e_2}{A_2,B_2}$
(see Lemma~\ref{lem:relating-the-two-functions}).

\textbf{Case 2:}
   With $e_1\in\edgesIn{G_i}$ and $e_2\in\edgesOut{G_i}$, we have in this case:
   \begin{alignat*}{5}
   & \edgesIn{G_k} &&=
       \ \ && \edgesIn{G_i} - \Set{e_1} ,
   \\[1.32ex]
   & \edgesOut{G_k} &&=
       \ \ && \edgesOut{G_i} - \Set{e_2} ,
   \\[1.32ex]
   & \edgesSharp{G_k} &&=
       \ \ && \edgesSharp{G_i} \cup \Set{e}. 
   \end{alignat*}   
Consider arbitrary $A\subseteq\edgesIn{G_k}$ and $B\subseteq\edgesOut{G_k}$.
Since $\edgesIn{G_k}\subseteq\edgesIn{G_i}$  and
$\edgesOut{G_k}\subseteq\edgesOut{G_i}$, we also have
$A\subseteq\edgesIn{G_i}$ and $B\subseteq\edgesOut{G_i}$. We then define:
   \begin{alignat*}{5} 
   & \maxFromTo{{\N}_k}{A,B}\ \triangleq\ \ 
   \\[1.01ex]   
   &\qquad \maxFromTo{{\N}_i}{A,B}\ +
   \\[1.01ex]
   &\qquad \min\,\SET{\;
    \maxFromToAft{{\N}_i}{A\cup\Set{e_1},B}{A,B},
    \ \maxFromToAft{{\N}_i}{A,B\cup\Set{e_2}}{A,B}\;}
   \end{alignat*}
The first line after `$\triangleq$' is the part of the maximum flow from $A$
to $B$ that does not use the edge $e$; the second line after `$\triangleq$'
is the part of the maximum flow from $A$ to $B$ that does use the edge $e$.
We have again defined $\maxFromToFn{{\N}_k}$ in terms of the already-defined, by
IH, functions $\maxFromToFn{{\N}_i}$ and $\maxFromToFn{{\N}_i}$, and again
invoking Lemma~\ref{lem:relating-the-two-functions} which gives us 
$\maxFromToAftFn{}$ in terms of $\maxFromToFn{}$.
The value of
$\maxFromTo{{\N}_k}{A,B}$ is obtained by using `$+$' once, `$\min$' once,
and twice `$-$' for the invocations of
$\maxFromToAft{{\N}_i}{A\cup\Set{e_1}}{A,B}$
and $\maxFromToAft{{\N}_i}{A,B\cup\Set{e_2}}{A,B}$
(see Lemma~\ref{lem:relating-the-two-functions}).

This completes the induction step and the definition of the function
$\maxFromToFn{{\N}_i}$ for every $i = 1,\ldots,n+m$.

The next lemma is used in the proof of Theorem~\ref{thm:our-result}.

\begin{lemma}
\label{lem:reassembling}
Consider the reassembling of the flow network
$\N = (G,c)$ described in the opening paragraph of
Section~\ref{sect:reassembling}. 
Let $\delta$ be the least upper bound of the resulting edge boundary
degrees $\Set{p_i+q_i\,|\,1\leqslant i\leqslant n+m}$. Then the
function $\maxFromToFn{{\N}} = \maxFromToFn{{\N}_{n+m}}$ is computed
in time $\bigO{(m+n)\cdot 2^{\delta}}$ using only three arithmetic
operations $\Set{\min,+,- }$.
\end{lemma}

\begin{proof}
For each $i = 1,\ldots,n+m$, the number of arguments
$(A,B) \in \edgesIn{G_i}\times\edgesOut{G_i}$ at which the
function $\maxFromToFn{{\N}_{i}}$ has to be determined is
$2^{p_i}\cdot 2^{q_i} = 2^{p_i+q_i} \leqslant 2^{\delta}$. And each
such determination is carried out using at most four times an
operation in $\Set{+,-}$ and at most twice an operation in
$\Set{\max,\min}$. A subtraction with `$-$' is involved with each invocation
of the function $\maxFromToAftFn{}$
(Lemma~\ref{lem:relating-the-two-functions}). Both `$+$' and `$\min$'
are involved in the determination of
$\maxFromToFn{{\N}_{1}}, \ldots, \maxFromToFn{{\N}_{n}}$,
and both `$-$' and `$\min$' are involved in the determination of
$\maxFromToFn{{\N}_{n+1}}, \ldots, \maxFromToFn{{\N}_{n+m}}$.
\end{proof}


\section{The Main Result}
\label{sect:our-result}

In order to use the algorithm whose existence is asserted in
Lemma~\ref{lem:reassembling-of-3-regular} in our main result
(Theorem~\ref{thm:our-result}), we need to transform the underlying
graph $G$ of the network $\N = (G,c)$ according to
Lemma~\ref{lem:equivalent-transformation}. Part 5 of the latter uses
the notion of \emph{edge-outerplanarity}, which we next define and
compare to the standard notion of \emph{outerplanarity}.

  We make a distinction between \emph{planar} graphs and \emph{plane}
  graphs.  $G$ is a \emph{plane graph} if it is drawn on the plane
  without any edge crossings. $G$ is a \emph{planar graph} if it is
  isomorphic to a plane graph; \ie, it is embeddable in the plane in
  such a way that its edges intersect only at their endpoints.  To keep
  the distinction between the two notions, we define
  the \emph{outerplanarity index} of a \emph{planar} graph and
  the \emph{outerplanarity} of a \emph{plane} graph.

  If $G$ is a plane graph, directed or undirected, then
  the \emph{outerplanarity} of $G$ is the number $k$ of times that all
  the vertices on the outer face (together with all their incident
  edges) have to be removed in order to obtain the empty graph. In
  such a case, we say that the plane graph $G$
  is \emph{$k$-outerplanar}.

  If $G$ is a planar graph, directed or undirected, then
  the \emph{outerplanarity index} of $G$ is the minimum of the
  outerplanarities of all the plane embeddings $G'$ of $G$.

  Deciding whether an arbitrary graph is planar can be carried out in
  linear time $\bigOO{n}$ and, if it is planar, a plane embedding of it
  can also be carried out in linear time~\cite{patrignani2013}.  Given a
  planar graph $G$, the outerplanarity index $k$ of $G$ and a
  $k$-outerplanar embedding of $G$ in the plane can be computed in time
  $\bigOO{n^2}$, and a $4$-approximation of its outerplanarity index can
  be computed in linear time~\cite{Kammer2007}.

\begin{definition}{Edge-Outerplanarity}
\label{def:edge-outerplanarity}
  Let $G$ be a plane graph, directed or undirected.
  If $\edges{G} =\varnothing$ and $G$ is a graph of isolated vertices,
  the \emph{edge outerplanarity} of $G$ is $0$. If $\edges{G} \neq\varnothing$,
  we pose $G_0 := G$ and define $K_0$ as the set of edges lying on $\OutF{G_0}$.

  For every $i>0$, we define $G_i$ as the plane graph obtained
  after deleting all the edges in $K_0 \cup \cdots \cup K_{i-1}$ from the
  initial $G$ and $K_i$ the set of edges lying on $\OutF{G_i}$.
  
  The \emph{edge outerplanarity} of $G$, denoted $\OutPlan{E}{G}$, 
  is the least integer $k$ such that $G_{k}$ is a graph without edges,
  \ie, the edge outerplanarity of $G_{k}$ is $0$. This process of peeling
  off the edges lying on the outer face $k$ times produces a $k$-block
  partition of $\edges{G}$, namely, $\Set{K_0,\ldots,K_{k-1}}$.%
   \footnote{There is an unessential difference between our definition here and
     the definition in~\cite{bentz2009}. In Section 2.2 of that
     reference, ``a $k$-edge-outerplanar graph is a planar graph
     having an embedding with \emph{at most} $k$ layers of edges.'' In
     our presentation, we limit the definition to plane graphs and say ``a
     $k$-edge-outerplanar plane graph has \emph{exactly} $k$ layers of
     edges.'' Our version simplifies a few things later.}
\end{definition}

  To keep \emph{outerplanarity} and \emph{edge outerplanarity} clearly
  apart, we call the first \emph{vertex outerplanarity}, or more
  simply \emph{V-outerplanarity}, and the second \emph{edge
  outerplanarity}, or more simply \emph{E-outerplanarity}.

  There is a close relationship between \emph{V-outerplanarity} and
  \emph{E-outerplanarity} (Theorem 4 in Section 5.1
  in~\cite{bentz2009}).  In the case of three-regular plane graphs,
  the relationship is much easier to state. This is
  Proposition~\ref{prop:V-outer-vs-E-outer} next, not needed for our
  main result (Theorem~\ref{thm:our-result}) but included here for
  completeness.

  \begin{proposition}
  \label{prop:V-outer-vs-E-outer}
  If $G$ is a $3$-regular plane graph, directed or undirected, then:
  \[
     \OutPlan{V\!\!}{G} \leqslant \OutPlan{E}{G} 
     \leqslant 1+ \OutPlan{V\!\!}{G}.
  \]
  Thus, for $3$-regular plane graphs,
  \emph{V-outerplanarity} and \emph{E-outerplanarity} are ``almost the same''.
\end{proposition}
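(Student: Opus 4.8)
The plan is to compare the two peeling processes layer by layer. Write $k=\OutPlan{V\!\!}{G}$ and let $L_1,\ldots,L_k$ be the vertex layers produced by the $V$-peeling, so $L_j$ is the set of vertices removed at the $j$-th step; for a vertex $v$ let $d(v)$ denote its \emph{depth}, the index $j$ with $v\in L_j$. First I would record the standard structural fact of such a layering: every edge joins two vertices whose depths differ by at most $1$ (if some $uv$ joined $L_\ell$ to $L_{\ell+2}$, then removing $L_\ell$ would already expose $v$, forcing $d(v)=\ell+1$). Thus every edge is either \emph{intra-layer} ($d(u)=d(v)$) or \emph{inter-layer} ($|d(u)-d(v)|=1$). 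Writing $\lambda(e)=j$ for the $E$-peeling index with $e\in K_j$, the whole proposition reduces to two pointwise estimates:
\[
  \max(d(u),d(v))-1\ \leqslant\ \lambda(e)\ \leqslant\ \max(d(u),d(v))
  \qquad\text{for every edge }e=uv .
\]
Indeed, applying the left inequality to an edge incident to a deepest vertex (depth $k$) shows some edge survives into $G_{k-1}$, whence $\OutPlan{E}{G}\geqslant k$; the right inequality shows every edge is gone by step $\max(d(u),d(v))\leqslant k$, so $G_{k+1}$ is edgeless and $\OutPlan{E}{G}\leqslant k+1$.

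The lower estimate $\lambda(e)\geqslant\max(d(u),d(v))-1$ does not use $3$-regularity. I would prove, by induction on $i$, that every vertex lying on $\OutF{G_i}$ has depth at most $i+1$. The base case is immediate since $\OutF{G_0}=\OutF{G}$ carries exactly the depth-$1$ vertices; the inductive step uses that the edges of $K_i$ join vertices of depth $\leqslant i+1$, so their removal can expose only a vertex adjacent (in $G$) to such a vertex, which by the consecutive-layer fact has depth $\leqslant i+2$. Since an edge of $K_i$ has both endpoints on $\OutF{G_i}$, it then has $\max(d(u),d(v))\leqslant i+1$, i.e. $\lambda(e)=i\geqslant\max(d(u),d(v))-1$.

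The upper estimate $\lambda(e)\leqslant\max(d(u),d(v))$ is where $3$-regularity is essential, and it is the main obstacle. The key point is that within a single layer the induced subgraph $G[L_\ell]$ has maximum degree at most $2$: for $\ell\geqslant 2$ every $v\in L_\ell$ became exposed only when $L_{\ell-1}$ was removed, so $v$ has at least one neighbour of smaller depth, and every $v\in L_k$ has no deeper neighbour; since $\degr{}{v}=3$, at most two of $v$'s edges stay inside $L_\ell$. Hence each $G[L_\ell]$ with $\ell\geqslant 2$ is a disjoint union of paths and cycles and has \emph{no chords}; consequently, once the shallower layers have been stripped and $L_\ell$ forms the current outer boundary, \emph{all} edges of $G[L_\ell]$ lie on the outer face and are removed in a single $E$-step, together with the pendant inward edges inherited from $L_{\ell-1}$. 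Iterating this ``one layer per $E$-step'' alignment from the boundary inward gives $\lambda(e)\leqslant\max(d(u),d(v))$ for every edge of depth $\geqslant 2$; the only place the alignment can slip is the outermost layer $L_1$, whose vertices may carry all three incident edges inside $L_1$ (a chord of the outer boundary that outlives the boundary cycle by one $E$-step). This single possible slippage is exactly what the additive ``$+1$'' absorbs.

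I would make the upper estimate precise by carrying the invariant ``$G_i$ contains no edge both of whose endpoints have depth $\leqslant i-1$'' through the induction, since by the max-degree-$2$ structure of the inner layers any such edge is either already deleted or presently on $\OutF{G_{i-1}}$ and hence cannot survive past step $i$. The hard part will be discharging this invariant for \emph{all} plane $3$-regular graphs, not merely the $2$-connected ones drawn with each layer a clean cycle: I must handle bridges and cut vertices on the outer face, boundary walks visiting a vertex more than once, and the bookkeeping ensuring that such degeneracies delay the process by at most one step \emph{in total} rather than by one unit per layer. Controlling this global slippage is the crux; once it is in hand, the two estimates above yield $\OutPlan{V\!\!}{G}\leqslant\OutPlan{E}{G}\leqslant 1+\OutPlan{V\!\!}{G}$, completing the proof.
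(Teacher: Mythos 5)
Your reduction of the proposition to the two pointwise estimates $\max(d(u),d(v))-1\leqslant\lambda(e)\leqslant\max(d(u),d(v))$ is a sensible plan (and finer than the paper's own two-sentence sketch, which only says the two peelings agree except possibly at the last stage), but both of your layer-comparison arguments rest on the same false premise: that exposure propagates along \emph{edges}. Under either peeling, a vertex becomes exposed when a \emph{face} containing it merges into the outer face, and co-facial vertices need not be adjacent. Concretely, take the $3$-regular plane graph $G^*$ with outer hexagon $z_1z_2z_3z_4z_5z_6$, three vertices $x,y,w$ joined respectively to $\Set{z_1,z_2}$, $\Set{z_3,z_4}$, $\Set{z_5,z_6}$, and a central vertex $v$ joined to $x,y,w$. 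Here $L_1=\Set{z_1,\ldots,z_6}$ and $L_2=\Set{x,y,w,v}$, because $v$ lies on faces whose boundaries contain hexagon vertices, so removing $L_1$ exposes $v$. This kills your upper-bound argument outright: $v$ has depth $2$ but \emph{no} neighbour of depth $1$, and $G^*[L_2]$ is the star $K_{1,3}$, so your claim that each $G[L_\ell]$ with $\ell\geqslant 2$ has maximum degree at most $2$ --- the foundation of the ``one layer per $E$-step'' alignment --- is false. Since you also concede that the control of global slippage is left open (``the crux''), the upper half $\OutPlan{E}{G}\leqslant 1+\OutPlan{V\!\!}{G}$ has no proof here; it needs a different mechanism, namely an argument about when the faces on \emph{both} sides of an edge have been merged into the outer face, and that is where $3$-regularity must really be used (a vertex on the boundary has at most one inward edge, so lagging chords cannot form a cycle that hides the next layer; with degree $4$ one can build chord-cycles that make $\OutPlan{E}{G}$ roughly twice $\OutPlan{V\!\!}{G}$).

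The same example shows the literal inductive step of your lower estimate is also wrong: the first $E$-step deletes the six hexagon edges, and this exposes $v$, which is adjacent to no vertex of $\OutF{G_0}$, contradicting ``their removal can expose only a vertex adjacent (in $G$) to such a vertex.'' This half, however, is repairable, and your instinct that it needs no $3$-regularity is right. Replace adjacency by co-faciality: (i) every deleted edge borders the current outer face, so two inner faces are never merged with each other, and hence every inner face of every $G_i$ is an inner face of the original $G$; (ii) if two vertices lie on a common face of $G$, their depths differ by at most one, since removing the shallowest vertex of the face merges that face into the outer region and exposes the rest. With (i) and (ii), a newly exposed vertex of $G_{i+1}$ lies on an inner face of $G$ whose boundary contains an edge of $K_i$, whose endpoints have depth at most $i+1$ by the induction hypothesis; so your invariant ``every vertex on $\OutF{G_i}$ has depth at most $i+1$'' goes through and yields $\OutPlan{V\!\!}{G}\leqslant\OutPlan{E}{G}$. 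In summary: the lower inequality is fixable, but the upper inequality is genuinely unproved in your proposal.
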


\begin{sketch}
  For a $3$-regular plane graph, the difference between
  $\OutPlan{V\!\!}{G}$ and $\OutPlan{E}{G}$ occurs in the last stage
  in the process of repeatedly removing (in the case of standard
  \emph{V-outerplanarity}) all vertices on the outer face and all their incident
  edges. The corresponding last stage in the case
  of \emph{E-outerplanarity} may or may not delete all edges; if
  it does not, then one extra stage is needed to delete all remaining
  edges.
\end{sketch}

\begin{lemma}
\label{lem:equivalent-transformation}
There is an algorithm which, given an arbitrary flow
network $\N = (G,c) $, returns a flow network
$\transA{\N} = (\transA{G},\transA{c})$ in time $\bigOO{n+m}$,
where $n = \size{\vertices{G}}$ and $m = \size{\edgesSharp{G}}$, such that:
\begin{enumerate}[itemsep=1pt,parsep=2pt,topsep=2pt,partopsep=0pt]
\item $\edgesIn{\transA{G}} = \edgesIn{G}$
      and $\edgesOut{\transA{G}} = \edgesOut{G}$, so that
      also $\edgesIO{\transA{G}} = \edgesIO{G}$.
\item $\size{\vertices{\transA{G}}} = \bigOO{n}$ and
      $\size{\edgesSharp{\transA{G}}} = \bigOO{m}$ .
\item $\N$ and $\transA{\N}$ are equivalent flow networks, in particular, \\
      $\tau: \Power{\edgesIO{G}} \to\intervals{\reals}$ is a
      principal typing for $\N$ iff
      it is a principal typing for $\transA{\N}$.
\item $\transA{G}$ is a $3$-regular directed graph without two-edge cycles.%
      \footnote{See footnote~\ref{foot:two-edge-cycle} on
      page~\pageref{foot:two-edge-cycle}.}
\end{enumerate}
Moreover, if $G$ is a plane graph, then:
\begin{itemize}[itemsep=1pt,parsep=2pt,topsep=2pt,partopsep=0pt]
\item[5.] $\transA{G}$ is a plane graph such that
          $\OutPlan{E}{\transA{G}} = \OutPlan{E}{G}$.
\end{itemize}
\end{lemma}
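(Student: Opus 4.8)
The plan is to prove the lemma by a local \emph{vertex-expansion} construction: replace each vertex of $G$ by a small directed cycle and distribute its incident edges around that cycle. Concretely, for a vertex $v$ with incident edges $e_1,\dots,e_d$ listed in the cyclic order given by the plane embedding (where $d = \degr{}{v}$), introduce fresh vertices $v_1,\dots,v_d$ joined by internal arcs $v_1\!\to\! v_2\!\to\!\cdots\!\to\! v_d\!\to\! v_1$, and reattach $e_i$ to $v_i$. Each $v_i$ then carries exactly one former edge plus two cycle arcs, so it has degree $3$. This handles every $v$ with $d\geqslant 3$; vertices with $d\leqslant 2$ (and the demand that the result be \emph{exactly} $3$-regular rather than of maximum degree $3$) are dealt with by splicing in a constant-size padding gadget of zero-capacity arcs, placed in the same tiny disk around $v$, which raises the deficient degrees to $3$ without carrying any flow. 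Since the total work and the number of new vertices and arcs are both proportional to $\sum_v \degr{}{v} = \bigOO{n+m}$, the construction runs in linear time and gives parts~2 and~4 of the statement, the absence of two-edge cycles following because the two arcs of any former two-edge cycle now meet distinct gadget vertices on each side.

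For the flow- and typing-equivalence (parts~1 and~3) I would assign every internal arc a capacity exceeding the total capacity $\sum_{e}c(e)$ of the arcs at $v$, so that internal arcs are never a bottleneck. The key point is that a feasible flow of $\N$ extends to one of $\transA{\N}$ and conversely: given external flows into and out of the $v_i$, the cycle equations $c_i = c_{i-1} + (\text{net inflow at } v_i)$ close up consistently around $v_1,\dots,v_d$ exactly when flow is conserved at $v$, and the one free additive constant can be chosen so that all internal flows $c_i$ are nonnegative and (by the large capacity) feasible; conversely, summing conservation around the cycle recovers conservation at $v$, while the external arcs retain their original capacities. Because the dangling (IO) edges are never moved --- each stays attached to its single gadget vertex --- part~1 is immediate, and because internal arcs never bind, $\maxFromTo{}{A,B}$ is unchanged for every $A,B$; part~3 then follows from Lemma~\ref{lem:typing}, which expresses the principal typing solely through $\maxFromToSym$.

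The hard part will be part~5, the claim that $\OutPlan{E}{\transA{G}} = \OutPlan{E}{G}$ on the nose. My plan is to embed each gadget inside a tiny disk about the former position of $v$, so that $\transA{G}$ is combinatorially $G$ with each vertex blown up into a small cycle bounding one new interior face $I_v$. I would then compare the two edge-peeling processes of Definition~\ref{def:edge-outerplanarity} round by round. Each original (external) arc keeps its planar position, and the new arc $v_iv_{i+1}$ borders only $I_v$ and the ``wedge'' face lying between the consecutive edges $e_i,e_{i+1}$ at $v$; hence its peeling round is controlled by the rounds of $e_i$ and $e_{i+1}$. The crux --- and the delicate bookkeeping --- is to show that the interior face $I_v$ merges into the outer region precisely when the surrounding external arcs of its gadget have been removed, so that no internal arc ever survives into a round beyond the last external arc of its own gadget. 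This gives $\OutPlan{E}{\transA{G}} \leqslant \OutPlan{E}{G}$; the reverse inequality follows because the external arcs, keeping their positions, are peeled in exactly their original rounds. I expect this face-tracking induction on the peeling rounds to be the technical heart of the argument.
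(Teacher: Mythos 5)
Your proposal cannot be compared line-by-line with the paper's argument, because the paper does not actually prove this lemma: its ``proof'' is a pointer to Section~3 of the earlier report~\cite{kfoury2018b}, where the construction is carried out. Judged on its own terms, your cycle-expansion gadget is the natural construction, and parts~1--4 go through with routine care (the padding gadgets for degree $\leqslant 2$, internal capacities of the order of the sum of the incident capacities, and the around-the-cycle conservation argument are all fine). One small discrepancy: your construction produces $\sum_v \degr{}{v} = \bigOO{n+m}$ new vertices, so part~2's bound $\size{\vertices{\transA{G}}} = \bigOO{n}$ holds only when $m=\bigOO{n}$; this is harmless for the paper's application (plane graphs) but does not match the lemma as stated for arbitrary networks.

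The genuine gap is part~5, and it is not just missing bookkeeping: both halves of your face-tracking plan are false for the tiny-disk cycle gadget, in opposite directions. The mechanism is that the gadget's interior face $I_v$ is a \emph{new} face adjacent (across internal arcs) to \emph{every} face incident to $v$; in the face-adjacency structure that governs the peeling of Definition~\ref{def:edge-outerplanarity}, $I_v$ is a shortcut joining all wedges at $v$. First, an increase: let $G$ be a bowtie, two triangles sharing only the vertex $v$, so $\OutPlan{E}{G}=1$. In $\transA{G}$ the internal arc separating one triangle's interior wedge from $I_v$ borders two faces neither of which is the outer face, so it survives round~1 and peels in round~2; hence $\OutPlan{E}{\transA{G}}=2$. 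This refutes your claim that no internal arc survives beyond the last external arc of its own gadget (all external arcs at $v$ peel in round~1). Second, a decrease: let $G$ consist of nested triangles $C_1\supset C_2\supset\cdots\supset C_k$ all sharing exactly the vertex $v$, so $\OutPlan{E}{G}=k$. In $\transA{G}$, round~1 peels $C_1$ and the gadget arc on the outer wedge, putting $I_v$ into the outer region; round~2 then peels \emph{all} remaining gadget arcs (each borders $I_v$), which merges every wedge at $v$ --- i.e., every annular face between consecutive $C_i$ and the innermost face --- into the outer region; round~3 peels everything else. So $\OutPlan{E}{\transA{G}}=3$ for every $k\geqslant 3$, refuting your claim that external arcs are peeled in exactly their original rounds. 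What survives of your argument is only this: since the dual of $G$ embeds in the dual of $\transA{G}$, external edges never peel later than originally, and internal arcs peel at most one round after the shallowest wedge of their gadget, giving $\OutPlan{E}{\transA{G}}\leqslant\OutPlan{E}{G}+1$; that weaker bound still yields Theorem~\ref{thm:our-result} with $2k+2$ in place of $2k$, but the exact equality asserted in part~5 is unattainable by this gadget, so a genuinely different construction or embedding (as in the cited report) is required to prove the lemma as stated.
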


It is worth pointing out that the hidden constants in the big-O notations
above are small integers, each a single-digit number.

\begin{proof}
  This is shown in Section 3 of the earlier report~\cite{kfoury2018b}. The
  $5$-part conclusion of the lemma here is divided into several lemmas in
  the earlier report.
\end{proof}

  Let $G$ be a simple undirected graph. A \emph{reassembling} of $G$
  is a rooted binary tree $\B$ whose nodes are subsets of
  $\vertices{G}$ and whose leaf nodes are singleton sets, with each of
  the latter containing a distinct vertex of $G$.  The parent of two
  nodes in $\B$ is the union of the two children's vertex sets. The
  root node of $\B$ is the full set $\vertices{G}$. If
  $n = \size{\vertices{G}}$, there are thus $n$ leaf nodes in $\B$ and a
  total of $(2n -1)$ nodes in $\B$. We denote the reassembling of $G$
  according to $\B$ by writing $(G,\B)$.%
  \footnote{To keep apart $\B$ and $G$, we reserve the words `node'
  and `branch' for the tree $\B$, and the words `vertex' and `edge'
  for the graph $G$.}

  The \emph{edge-boundary degree} of a node in $\B$ is the number of
  edges that connect vertices in the node's set to vertices not in the
  node's set. Following a terminology used in earlier reports,
  the \emph{$\alpha$-measure} of the reassembling $(G,\B)$, denoted
  $\alpha(G,\B)$, is the largest edge-boundary degree of any node in
  the tree $\B$. We say $\alpha(G,\B)$ is \emph{optimal}
  if it is minimum among all $\alpha$-measures of $G$'s reassemblings,
  in which case we also say $\B$ is \emph{$\alpha$-optimal}.%
  \footnote{The reassembling process described in the Introduction,
  Section~\ref{sect:intro}, and again in the opening paragraph of
  Section~\ref{sect:reassembling}, is a \emph{lazy} version of the
  reassembling defined here. We can call the latter the \emph{eager}
  version of reassembling. The difference is that, in the lazy
  version, only one edge's two halves are spliced at any given time;
  in the eager version defined in this section, the two halves of all
  the edges between two disjoint components (\ie, two sibling nodes in
  the tree $\B$) are spliced simultaneously. Hence, if we carry out
  the reassembling $(G,\B)$ lazily, then a least upper bound on the
  edge-boundary degrees of all the components is $2\cdot\alpha(G,\B) -
  1$. }

  The problem of constructing an $\alpha$-optimal reassembling
  $(G,\B)$ of a simple undirected graph $G$ in general was already
  shown NP-hard~\cite[among
  others]{kfoury+mirzaei:2017,kfoury+mirzaei:2017B}.  However,
  restricting attention to \emph{plane} graphs, we have the following
  positive result.

\begin{lemma}
\label{lem:reassembling-of-3-regular}
There is an algorithm which, given a plane $3$-regular simple undirected
graph $G$, returns a reassembling $(G,\B)$ in time $\bigOO{n}$
such that $\alpha(G,\B) \leqslant 2k$,
where $k = \OutPlan{E}{G}$ and $n = \size{\vertices{G}}$.
\end{lemma}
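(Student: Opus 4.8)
The plan is to build the reassembling tree $\B$ so that the vertex set of every node is a contiguous region of the plane embedding that is crossed by at most $2k$ edges, and to read off that tree from the nested edge-layers $K_0,\ldots,K_{k-1}$ of Definition~\ref{def:edge-outerplanarity}. First I would fix a plane embedding of $G$ and compute its edge-layer decomposition by repeatedly peeling the edges on the current outer face; for a plane graph this peeling, together with the incident-face structure, is computable in $\bigOO{n}$. The $K_i$ form nested closed boundaries separating the outer face from the innermost region, and because $G$ is $3$-regular each vertex carries two edges along ``its'' boundary cycle and a single edge crossing to an adjacent layer. The point of the decomposition is that a simple curve drawn in the plane from the outer face inward to the innermost region, if it is monotone across the layers, crosses at most one edge of each $K_i$, hence at most $k$ edges in total. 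I will call such a curve a \emph{radial cut}, and I record that a radial cut separates the vertices to its two sides while severing at most $k$ edges.

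Second, using radial cuts I would construct $\B$ top-down. The root is $\vertices{G}$. I split it by a pair of radial cuts (a ``diameter'') into two angular sectors, each bounded by two radial cuts and therefore incident to at most $2k$ crossing edges. Thereafter every node is an angular sector, and I split a sector by one further radial cut from its outer arc to the innermost region into two sub-sectors; since one bounding cut is inherited and the new cut severs at most $k$ edges, each child is again incident to at most $2k$ crossing edges. I continue the angular bisection until a sector is as thin as the embedding allows, namely a single radial strand of vertices running from the outer layer to the inner layer, which is bounded by its two radial sides and so again crosses at most $2k$ edges. Finally I split a strand by cutting between two consecutive layers: using $3$-regularity, a single inter-layer edge of the strand is severed while the $2r$ cycle edges on the outer part and the $2(k-r)$ on the inner part keep each piece at $2k$ crossing edges or fewer, down to the leaves, which are single vertices of edge-boundary degree $3\leqslant 2k$. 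Because every node of $\B$ is incident to at most $2k$ edges, $\alpha(G,\B)\leqslant 2k$.

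Third, for the running time I note that $\B$ has $2n-1$ nodes, and that once the layer decomposition and a cyclic (angular) order of the vertices within each layer are computed in $\bigOO{n}$, each split is determined by an index into a precomputed order; building the balanced binary tree over this order, and charging each vertex to the $\bigOO{1}$ splits that first separate it, is $\bigOO{n}$ overall.

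The hard part will be making the notion of a radial cut and its ``$\leqslant k$ crossings'' bound rigorous when the peeled layers $K_i$ are not single nested cycles -- for instance when some $G_i$ is disconnected, when a layer boundary is a closed walk rather than a cycle, or when bridges and two-edge faces distort the clean concentric picture. Handling these cases is where the $3$-regularity and the exact-layer convention of Definition~\ref{def:edge-outerplanarity} must be used with care, either by first decomposing $G$ into its biconnected components and treating each as a concentric system, or by defining the radial cut combinatorially through the dual adjacency of faces so that the per-layer crossing count is forced; verifying that the inductive $2k$ bound survives at the transitions between these cases is the main obstacle.
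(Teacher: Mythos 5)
There is a genuine gap, and you have in fact named it yourself. The entire construction rests on the ``radial cut'' device: a curve from the outer face to the innermost region crossing at most one edge of each layer $K_i$, hence at most $k$ edges, so that every node of $\B$ is a sector bounded by two such cuts. That crossing bound is only justified in the idealized picture where every $K_i$ is a single cycle and the layers are nested concentrically around a unique innermost region. Your last paragraph concedes that the general case -- disconnected $G_i$, layer boundaries that are closed walks with bridges or cut vertices, several ``innermost'' pockets in different parts of the embedding -- is left open, and that is not a routine verification to be deferred: it is precisely the substance of the lemma. Moreover, the structural claim you extract from $3$-regularity (``each vertex carries two edges along its boundary cycle and a single edge crossing to an adjacent layer'') is already false for $K_4$: after peeling the outer triangle, the central vertex has all three of its incident edges in the layer $K_1$. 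Once that local picture fails, an ``angular sector'' is not well defined, a single layer can be crossed many times by any curve reaching an inner pocket, and the inductive invariant ``each child is incident to at most $2k$ crossing edges'' has no proof. The ``thin strand'' base case (one vertex per layer) likewise need not exist, since a sector need not meet every layer and layers have unequal sizes. Note also that the paper itself does not prove this lemma internally; it cites Theorem 9 and Corollary 20 of an earlier Kfoury--Sisson report, so your argument has to stand entirely on its own, and at present it does not.

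Two smaller points. First, the leaf bound $3\leqslant 2k$ silently requires $k\geqslant 2$; this does hold, because every outerplanar graph has a vertex of degree at most $2$ and hence no $3$-regular plane graph has edge-outerplanarity $1$, but it should be argued rather than assumed. Second, the linear-time accounting (``each split is determined by an index into a precomputed order'') presupposes exactly the sector structure whose existence is in question; even granting it, you must show the total work of tracing all $\Theta(n)$ cuts is $\bigOO{n}$, not $\bigOO{n}$ per cut, and your charging argument is only sketched. In short, the plan captures a reasonable intuition -- derive $\B$ from the concentric edge-layer decomposition so that every node's boundary consists of two radial cuts -- but the lemma's actual content is making that intuition survive the degenerate layer structures, and that content is missing.
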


\begin{proof}
This is Theorem 9 and Corollary 20 in the report~\cite{kfoury+sisson2018}.
\end{proof}

\begin{theorem}
  \label{thm:our-result}
  There is an algorithm which, given a flow network $\N = (G,c)$ where
  $G$ is planar, computes a principal typing for
  $\N$ in time $\bigOO{n\cdot 2^{\delta}}$, where $n = \size{\vertices{G}}$,
  $\delta = \max\,\SET{\,2k ,\, \size{\edgesIn{G}\cup\edgesOut{G}}\,}$
  and $k = \OutPlan{E}{H}$ where $H$ is a plane embedding of $G$.
\end{theorem}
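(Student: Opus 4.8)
The plan is to reduce the construction of a principal typing to the computation of the single function $\maxFromToFn{\N}$, and then to compute that function by reassembling a $3$-regular, plane version of $\N$ along the tree supplied by Lemma~\ref{lem:reassembling-of-3-regular}. First I would invoke Lemma~\ref{lem:typing}: once $\maxFromToFn{\N}$ is known at the arguments $(\Complement{A},B)$ and $(A,\Complement{B})$ for all $A\subseteq\edgesIn{G}$ and $B\subseteq\edgesOut{G}$, the typing is read off directly as $\tau(A,B) = [\,-\maxFromTo{}{\Complement{A},B},\ \maxFromTo{}{A,\Complement{B}}\,]$. There are $2^{\size{\edgesIO{G}}}$ such pairs, so merely tabulating the output already costs $\Theta(2^{\size{\edgesIO{G}}})$; this is the origin of the term $\size{\edgesIn{G}\cup\edgesOut{G}}$ inside $\delta$, and it shows that term is unavoidable.

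Next I would fix the geometry and the reassembling. Since $G$ is planar, I would compute a plane embedding $H$ of $G$ in time $\bigOO{n}$ and set $k = \OutPlan{E}{H}$. Applying Lemma~\ref{lem:equivalent-transformation} to $(H,c)$ yields, in time $\bigOO{n+m}$, an equivalent network $\transA{\N} = (\transA{H},\transA{c})$ whose graph is plane, $3$-regular, and free of two-edge cycles, which shares the input and output edges of $\N$ (hence the same principal typing, by Part~3) and satisfies $\OutPlan{E}{\transA{H}} = k$ (Part~5). Because $\transA{H}$ is $3$-regular without two-edge cycles, its underlying simple undirected graph $U$ is again $3$-regular, plane, and of the same E-outerplanarity $k$; moreover $\size{\edgesSharp{\transA{H}}} = \bigOO{n}$, so throughout the reassembling both vertex and edge counts are $\bigOO{n}$. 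Feeding $U$ to Lemma~\ref{lem:reassembling-of-3-regular} produces, in time $\bigOO{n}$, a reassembling tree $\B$ with $\alpha(U,\B) \leqslant 2k$.

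With $\B$ in hand I would run the inductive computation of Section~\ref{sect:reassembling}, building $\maxFromToFn{\N_i}$ bottom-up from the one-vertex components through the Case~1 and Case~2 recurrences and using Lemma~\ref{lem:relating-the-two-functions} to express every $\maxFromToAftSym$ in terms of $\maxFromToSym$. By Lemma~\ref{lem:reassembling} the running time is $\bigO{(m+n)\cdot 2^{\delta}} = \bigOO{n\cdot 2^{\delta}}$, where $\delta$ is the largest edge-boundary degree $p_i+q_i$ encountered. Everything thus rests on the bound $\delta \leqslant \max\Set{2k,\ \size{\edgesIO{G}}}$, and this is the step I expect to be the main obstacle. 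The dangling edges of any component are of two kinds: halves of internal edges straddling the cut, which $\B$ keeps below $2k$, and the genuine input/output edges of $\N$ inside the component, numbering at most $\size{\edgesIO{G}}$ in total. A crude estimate merely sums the two; obtaining their \emph{maximum} requires care on two fronts. First, I must splice the internal edges of each merge in an order that does not inflate the internal contribution beyond $2k$, since a careless lazy splicing would pay the $2\alpha-1$ penalty noted in the footnote. Second, I would order the reassembling so that the IO-bearing part of the boundary is absorbed only near the end, when the internal cut has already collapsed: a component deep in the tree carries a large internal cut but few IO edges, whereas one holding many IO edges has small internal cut, so the two contributions never peak together. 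Reconciling such an ordering with the outerplanarity-layer structure underlying Lemma~\ref{lem:reassembling-of-3-regular} — and confirming that the IO edges, being dangling and hence attached to the outer region of the embedding, can indeed be deferred — is the delicate part. Once $\delta$ is bounded, reading $\tau$ off $\maxFromToFn{\N}$ via Lemma~\ref{lem:typing} in time $\bigOO{2^{\size{\edgesIO{G}}}}$ completes the $\bigOO{n\cdot 2^{\delta}}$ algorithm.
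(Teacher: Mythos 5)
Your proposal retraces the paper's proof step for step: compute a plane embedding $H$ of $G$ in $\bigOO{n}$ time, transform via Lemma~\ref{lem:equivalent-transformation}, obtain the reassembling tree from Lemma~\ref{lem:reassembling-of-3-regular}, run the induction of Section~\ref{sect:reassembling} costed by Lemma~\ref{lem:reassembling}, and read off the principal typing with Lemma~\ref{lem:typing}. The architecture, and the role played by every lemma, are identical to the paper's.

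Where you and the paper part ways is precisely the step you call the main obstacle, and the comparison is instructive: the paper contains none of the ordering machinery you sketch. Its proof simply writes $\delta = \max\Set{2k,\, \size{\edgesIO{G}}}$ when invoking Lemma~\ref{lem:reassembling}, offering no argument that the lazy, one-edge-at-a-time splicing of Section~\ref{sect:reassembling} respects the eager bound $\alpha(\transA{H},\B)\leqslant 2k$ (the paper's own footnote concedes that lazy splicing may push edge-boundary degrees up to $2\alpha-1$), and no argument that the cut-edge and IO-edge contributions to $p_i+q_i$ combine by maximum rather than by sum. So the two difficulties you isolate are genuine, and the paper passes over both in silence: a literal combination of Lemma~\ref{lem:reassembling} with Lemma~\ref{lem:reassembling-of-3-regular} yields only $\delta \leqslant (4k-1) + \size{\edgesIO{G}}$, which still gives fixed-parameter linear time but with a weaker exponent than the theorem asserts. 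Your proposed remedy (careful splicing order within a merge, deferring absorption of the IO-bearing boundary) is an addition of your own, not something recoverable from the paper, and you correctly flag it as unfinished. In short: same route as the paper, and the one step you could not close is the one step the paper asserts without proof.
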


\begin{proof}
  We start by computing a plane embedding $H$ of $G$, which can be
  done in time $\bigOO{n}$, as pointed out at the beginning of this
  section. After this embedding, we refer to the network $(H,c)$ by
  the same name `$\N$'. Next, we use
  Lemma~\ref{lem:equivalent-transformation} to transform the network
  $\N = (H,c)$ into an equivalent network $\transA{\N} =
  (\transA{H},\transA{c})$ where $\transA{H}$ is a $3$-regular plane
  graph such that $k = \OutPlan{E}{H} = \OutPlan{E}{\transA{H}}$.  The
  transformation $\N \mapsto \transA{\N}$ is carried out in time
  $\bigOO{n+m}$ and therefore in time $\bigOO{n}$, because $H$ is a
  plane graph.

  Next, we compute a reassembling $(\transA{H},\B)$ in time $\bigOO{n}$,
  by invoking Lemma~\ref{lem:reassembling-of-3-regular}, with
  $\alpha(\transA{H},\B)\leqslant 2k$. We now use
  Lemma~\ref{lem:reassembling} to compute the function
  $\maxFromToFn{\transA{\N}} = \maxFromToFn{{\N}}$ in time
  $\bigO{(m+n)\cdot 2^{\delta}}$ and therefore in time $\bigO{n\cdot 2^{\delta}}$
  where $\delta =
  \max\,\SET{\,2k ,\, \size{\edgesIn{\transA{H}}\cup\edgesOut{\transA{H}}}\,} =
  \max\,\SET{\,2k ,\, \size{\edgesIn{G}\cup\edgesOut{G}}\,}$.

  Finally, we use Lemma~\ref{lem:typing} to return a principal typing $\tau$
  for $\N$, simultaneously with the computation of the function
  $\maxFromToFn{{\N}}$.
\end{proof}

  It is worth pointing out that the computation of principal typings in
  Theorem~\ref{thm:our-result} involves only three arithmetic operations
  $\Set{\,\min,\,+,\,-\,}$, according to Lemma~\ref{lem:reassembling}.


\section{Future Work}
\label{sect:future}

Flow networks in this report are the simplest possible and are of the
form $\N = (G,c)$, where the function $c : \edges{G}\to\nreals$
assigns an upper-bound capacity to every edge. The method proposed in
this report to compute principal typings for such networks, in
fixed-parameter linear time, is a `template' for further
extensions to more general forms of flow networks.

The next extension of the method considers flow networks of the form
$\N = (G, \lowerB, \upperB)$, where the two functions
$\lowerB, \upperB : \edges{G}\to\nreals$ assign a lower-bound capacity
and an upper-bound capacity, respectively, to every edge. And there
are still other extensions under consideration, including the
following:
\begin{itemize}[itemsep=1pt,parsep=2pt,topsep=2pt,partopsep=0pt]  
\item
  \emph{multicommodity flows}
  (formal definitions in~\cite[Chapt. 17]{1993orlin}),
\item
  \emph{minimum-cost flows}, \emph{minimum-cost max flows}, and variations
  (definitions in~\cite[Chapt. 9-11]{1993orlin}),
\item  
  \emph{flows with multiplicative gains and losses}, also called
  \emph{generalized flows} (definitions in~\cite[Chapt. 15]{1993orlin}),
\item  
  \emph{flows with additive gains and losses}
  (definitions in~\cite{brandenburg2011shortest}).  
\end{itemize}
This is on-going work requiring various refinements, not all the same for
the different extensions.


\ifTR
\else
\fi


{\footnotesize 
\addcontentsline{toc}{section}{References} 
\bibliographystyle{plain} 
\bibliography{./fixed-parameter-linear-time-for-typings}
}

\ifTR
\else
\fi

\Hide{

\newpage  
\appendix

\section{Appendix: Pseudocode of Algorithm $\KS$}
  \label{appendix:pseudocode}
  }

\end{document}